\documentclass[conference]{IEEEtran}
\usepackage[utf8]{inputenc} 
\usepackage[T1]{fontenc}
\usepackage{url}
\usepackage{ifthen}
\usepackage{cite}
\usepackage[cmex10]{amsmath} % Use the [cmex10] option to ensure complicance
\usepackage{amssymb}
\usepackage{amsthm}
\usepackage{mathtools}
\usepackage{tikz}
\usepackage{pgfplots}
\pgfplotsset{compat=1.18}
\usepackage{algorithm}
\usepackage{algpseudocode}

\newtheorem{theorem}{Theorem}

\renewcommand\baselinestretch{1}

\def\c{\mathbf{c}}
\def\m{\mathbf{m}}
\def\r{\mathbf{r}}
\def\v{\mathbf{v}}
\def\G{\mathbf{G}}
\def\b{\mathbf{b}}
\def\y{\mathbf{y}}

\def\BibTeX{{\rm B\kern-.05em{\sc i\kern-.025em b}\kern-.08em
    T\kern-.1667em\lower.7ex\hbox{E}\kern-.125emX}}
\begin{document}

\title{Characterization of Blind Code Rate Recovery in Linear Block Codes}

\author{\IEEEauthorblockN{Atreya Vedantam}
\IEEEauthorblockA{\textit{Department of Electrical Engineering} \\
\textit{Indian Institute of Technology Madras}\\
Chennai, India \\
ee22b004@smail.iitm.ac.in}
\and
\IEEEauthorblockN{Radha Krishna Ganti}
\IEEEauthorblockA{\textit{Department of Electrical Engineering} \\
\textit{Indian Institute of Technology Madras}\\
Chennai, India \\
rganti@ee.iitm.ac.in}
}

\maketitle

\begin{abstract} 
Forward Error Correction (FEC) is used ubiquitously in the communication pipeline. We explore noncooperative decoding where we aim to recover the code rate of a linear block code. We present a metric to characterize the quality of the code rate recovery which uses any rank based estimation technique. We derive a closed form expression for this metric in terms of the algorithmic and the environmental parameters and assert that it should be low for good recovery. We use this metric to derive an expression for a better code rate estimate in high noise conditions and compare it with existing estimates. Finally we validate the derived expression for the metric and the improved performance in the code rate estimate by simulating the recovery of a Low Density Parity Check (LDPC) code. This also enables us to derive the optimal algorithmic parameters for recovery. 
\end{abstract}

\begin{IEEEkeywords}
Blind detection, code rate estimation, error correcting codes, rank based techniques.
\end{IEEEkeywords}

\section{Introduction}
Blind recovery of Forward Error Correction (FEC) code parameters has been of great interest for a long time and has  applications in Adaptive Modulation and Coding (AMC) \cite{amc}, cognitive radio systems \cite{crs} and spectrum surveillance \cite{military}. In this scenario, the  receiver does not have knowledge of the code parameters to decode the message. The receiver  must rely on statistical information obtained through a large number of coded frames.

Various algorithms have been proposed for such blind recovery in non-cooperative contexts. In \cite{valembois}, Valembois proved that the general problem of obtaining a low rank matrix is NP-complete and proposed a heuristic algorithm to find minimum weight code words (originally developed by Canteaut and Chabaud \cite{cantchab}). Extending this, Cluzeau \cite{cluzeau} and Barbier \cite{barbier} used rank reduction techniques to find dual code words and tune the parameters in their algorithms by striking a balance between a high detection probability and a high false alarm probability. However, all of the above methods constrain themselves to using  hard decoded bits and do not leverage symbol-level information (soft information). Sicot \cite{sicot} introduced the usage of information at the symbol-level to aid in the recovery of dual code words, but no  performance analysis has been provided.  Xia \cite{xiawu}, Moosavi \cite{moosavi} and Luwei \cite{luwei} used Log-Likelihood Ratio (LLR) based techniques to estimate the code. However, all of their methods assume a known set of candidate codes. Ramabadran \cite{ramabadran} used a rank based approach but fixed the recovery matrix aspect ratio without analysis. Wang \cite{wang} used symbol-level information based techniques, but did not quantify the optimality of the parameters in the algorithm.

This paper focuses on blind recovery of a $[n, k]$ linear block code. We use symbol-level information to filter and set up a matrix in a similar approach to Wang et al. Thereafter, we present a metric to characterize the quality of the code rate recovery. We argue why the metric is important and derive a closed form expression in terms of the algorithmic parameters and the environmental parameters. We use this metric to derive an expression for a better code rate estimate in high noise conditions and compare it with existing estimates. Finally, we validate the derived expression for the metric and the improved performance in the code rate estimate by simulating the recovery of an LDPC code in noisy conditions and derive the optimal parameters for code recovery. 

\section{Mathematical Model}
We assume that an information source generates independent and identically distributed  bits. These are divided into blocks of size $k$ and each block is called a message, denoted by $\m$. Each message is multiplied by an unknown (but same) matrix of size $k \times n$ called the generator matrix (denoted by $G$) to generate a code word, denoted by $\c$. Mathematically,
\begin{align*}
    \c = \m \G, \quad \c \in \{0,1\}^n, \m \in \{0, 1\}^k, \G \in \{0, 1\}^{k\times n}.
\end{align*}
All operations done on bits in this paper are done in $GF(2)$. We do not assume any form for the generator matrix (for e.g. that it is systematic). However $\G$ represents an injective map and hence has a rank of $k$. Therefore, exactly $k$ bits out of the $n$ bits in $\c$ are \textit{independent} and the other $n-k$ are linearly dependent on these $k$ bits. These independent bits in the code word $\c$ are called \textit{message bits} and the dependent bits are called \textit{parity check bits}. This is a crucial observation that we exploit in the code parameter recovery. The ratio $\rho = k/n$ is called the code rate. We assume that each bit in $c$ is equally likely to be $0$ or $1$. This is a reasonable assumption in practical settings. Each code word $\c$ is then modulated using Binary Phase Shift Keying (BPSK) to generate a symbol vector, denoted by
    $ \b = 1-2\c$.
 This vector (symbol by symbol) is then sent over an Additive White Gaussian Noise (AWGN) channel with an unknown Signal-to-Noise Ratio (SNR), and received as a vector denoted by $\r \in \mathbb{R}^n$. Hence,
\begin{align}\label{1}
    \r = \b + \v,
\end{align}
where $\v\in \mathbb{R}^{n} = [v_1,v_2,\hdots, v_n]$ is a vector of i.i.d. noise samples drawn  from a normal distribution with variance
$\sigma^2$ 
whose probability density function (PDF) is given by
\begin{align*}
    f(v_k) = \frac{1}{\sigma \sqrt{2\pi}} \exp \left(\frac{-v_k^2}{2\sigma^2}\right), ~~ k= 1,\hdots, n.
\end{align*}
$\sigma^2 = N_0/2$, where $N_0$ is the noise spectral density. The vector $\r$ also corresponds to the LLRs of the transmitted symbols and hence will be referred to as the LLR vector. The receiver performs hard decision decoding on this vector to generate the received bit vector $\y \in \{0, 1\}^n$, generated component wise as
\begin{align*}
    y_i = \begin{cases}
        1 & r_i < 0,\\
        0 & r_i \geq 0.
    \end{cases}
\end{align*}
In any communication system, multiple such code words are transmitted and received over a long time. We assume $M$ code words are transmitted. The $M$ code words denoted  by $\mathcal{C}=\{\c_1, \c_2, \ldots, \c_M\}$ are modulated using BPSK mapping and transmitted over the AWGN channel.   We denote the corresponding received symbol vectors as $\mathcal{R}=\{\r_1, \r_2, \ldots, \r_M\}$. In the remainder of this paper, $\Phi(\cdot)$ represents the cumulative distribution function (CDF) of the standard normal distribution 
defined as
\begin{align*}
    \Phi(x) = \frac{1}{\sqrt{2\pi}} \int_{-\infty}^x \exp \left( \frac{-u^2}{2} \right) du,
\end{align*}
 and $Q(\cdot)$ represents the standard Q-function.
 defined as
\begin{align*}
    Q(x) = \frac{1}{\sqrt{2\pi}} \int_x^\infty \exp\left(\frac{-u^2}{2}\right) \text{d}u.
\end{align*}
In this paper, we  assume that time synchronization has been achieved using techniques in \cite{cluzeau} and the length of the code word $n$ has been found using techniques in \cite{ramabadran}. The pilot bits, preambles, and frame headers are removed from the frames received to obtain code word frames at the receiver. We also assume that we have a sufficient number of code words $M$ for the recovery process, an amount we quantify later. The goal is to recover the code rate $\rho$ using $\mathcal{R}$ in a  robust manner. 

\section{Algorithm}
In high SNR settings (or if the code word length is low), rank estimation methods such as \cite{kuvaja} can be used to estimate the code rate in a blind setting. However, with noise and particularly for large block sizes, the existing  rank based methods fail since the number of errors in the received frames return a poor rank estimate. Brute force also becomes computationally infeasible. In this Section, we introduce a new robust algorithm  based on soft information to recover the code rate $\rho$ from observing $\mathcal{R}$.
The algorithm takes in input parameters $\{\mathcal{R}, t_1,t_2,n\}$ and outputs the code rate estimate $\hat{\rho}$.

 \begin{enumerate}
    
     \item Estimate the noise variance $\sigma^2$ using \begin{align*}
        \frac1M \sum_{i=1}^M\text{Var}(\r_i) = 1 + \hat{\sigma}^2.
    \end{align*}  This is because in  \eqref{1}, the noise vector is independent of the symbol vector and hence their variances add. Taking the average of this estimate on all messages gives us a good estimate of the noise variance.
    \item Estimate the SNR as 
    \begin{align*}
        \hat{\text{SNR}} = \hat{\sigma}^{-2}.
    \end{align*}
     Estimate $p_e$, the probability of a bit error as
$       p_e = Q\left(\sqrt{\hat{\text{SNR}}}\right)$.
    \item Iterate the following procedure for all $i \in [1, 2, \ldots , M]$. Assume $t_1 \in [0, 1]$ and $t_2 \in \mathbb{N}$ are two parameters of the algorithm. For each  $\r_i$,  consider its $k$-th element $r_i^k$. If $|r_i^k| < t_1$ mark the corresponding $r_i^k$ as \textit{unreliable}. Let $j_i$ be the total number of unreliable bits in $\r_i$. If $j_i < t_2$ mark the received vector $\r_i$ \textit{suitable}.
    \item Suppose $M_s$ received vectors are found suitable. Decode these suitable received vectors (by BPSK thresholding) and  arrange the decoded bit vectors (written as a row vector) one below the other to generate a matrix of size $M_s \times n$, called the \textit{word matrix} and denoted by $\mathbf{W}$. Observe that each column in $\mathbf{W}$ contains either all message bits or all parity check bits since all code words belong to the same code $\mathcal{C}$. These are called \textit{message columns} (denoted by $\mathbf{W}_m$) or \textit{parity check columns} (denoted by $\mathbf{W}_p$) respectively.
    \item Apply the Gaussian elimination algorithm on the word matrix $\mathbf{W}$ to obtain its Row Reduced Echelon Form (RREF) in the below form (this will require a column permutation),
    \begin{align*}
        \mathbf{R} = \begin{pmatrix}
    \mathbf{I}^{k \times k} & \mathbf{D}^{k \times n} \\
     \mathbf{0}^{(M_s-k)\times k} & \mathbf{0}^{(M_s-k)\times (n-k)}\\
\end{pmatrix},
    \end{align*}
    where $\mathbf{D}$ is a matrix that captures all the linear dependencies between the columns of the word matrix and $\mathbf{I}$ is the standard identity matrix.
    \item Find the mean of each column $i$ denoted by $\mu(R_i)$ in $\mathbf{R}$. 
   
     Let  $k'$ denote the number of columns of $\mathbf{R}$ for which $\mu(R_i) \leq 1/M_s$. 
The code rate estimate is then given by \begin{align}\label{coderateestimate}
        \hat{\rho} = \frac{k' - \mathbb{E}[C]}{n-\mathbb{E}[C]},
    \end{align}
    where $\mathbb{E}[C]$ is defined in \eqref{ec}.
An approximation for $\mathbb{E}[C]$ is provided in \eqref{ecapprox}.

\end{enumerate}

\section{Analysis}
In this Section, we provide intuition for the above algorithm and introduce a metric to analyze the quality of the code rate recovery. 
Observe that every parity check bit is a linear combination (modulo $2$) of some message bits. There are $n-k$ such linearly dependent sets. Hence, in a noiseless word matrix, there are $n-k$ parity check columns which depend on message columns. Suppose a subset of the message columns $\mathbf{w}_i \subseteq \mathbf{W}_m$ and one parity check column $p_i \in \mathbf{W}_p$ for some $i \in \{1, 2, \cdots,  n-k\}$ form a linearly dependent set. That is, there exist a binary vector $\mathbf{\lambda}_i \in \{0, 1\}^{k}$ and scalar $\alpha_i \in \{0,1\}$ such that $\mathbf{\lambda}_i \cdot \mathbf{w}_i + \alpha_i p_i = 0$ (where the `$\cdot$' operation is the usual inner product over $GF(2)$).
We represent this set by
\begin{align*}
    \mathcal{S}_i = \{\mathbf{w}_i, p_i\}, i \in \{1, 2, \cdots,  n-k\}.
\end{align*}
Now suppose that there was exactly one error in the entire word matrix, occurring in a column of parity check $p_i$. Only $\mathcal{S}_1$ is affected by this error and the rank estimate increases by one. On the other hand, suppose that this error occurred in a message column $w_i$. This would affect all subsets $\mathcal{S}_j$ where $w_i \in \mathcal{S}_j$, possibly leading to increases in the multiple rank estimate. Therefore, for a good recovery, multiple columns should have no bit error.

We claim that the number of columns in the word matrix which have at least one bit error characterizes the recovery quality. To show this, we show that the presence of at least one bit error in a column will yield an increased rank estimate with high probability.

We consider a toy model with $d$ message columns and one parity check column linearly dependent on all the message columns. From the above discussion, a single error will increase the rank estimate from $d$ to $d+1$. Conditioned on at least one bit error in the word matrix, we obtain the correct rank estimate only when the bit errors retain the linear dependency. In other words, the erroneous parity check column must remain in the span of the erroneous message columns. We show that this probability is vanishingly small for large word matrices.

\begin{theorem}
    Suppose $p, m_1, m_2, \ldots, m_d \in \{0, 1\}^{M_s}$ are columns of the word matrix such that $p = \bigoplus_{i=1}^d m_i$. Let $p_e'$ be the probability of a bit error in the code word matrix. Suppose $M_s >> d$ and $p_e'$ was small such that $M_sp_e'$ is large. Then the probability that the rank increases by $1$ in the presence of at least one error (denoted by $p_1$) is lower bounded as follows.
    \begin{align}\label{rankincrease}
        p_1 \geq \Phi \left(\frac{(M_s-d)^2 - 2M_s^2(d+1)p_e'}{2M_s(d+1)\sqrt{M_sp_e'(1-p_e')}}\right).
    \end{align}
\end{theorem}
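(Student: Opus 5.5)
The plan is to reduce the event ``rank does not increase'' to a condition on the error vectors, show that this condition cannot hold when the number of errors is below a threshold $T$, and then estimate the probability of being below $T$ with a normal approximation. The target form comes from rewriting the argument of $\Phi$ in \eqref{rankincrease} as
\begin{align*}
\frac{T - M_s p_e'}{\sqrt{M_s p_e'(1-p_e')}}, \qquad T = \frac{(M_s-d)^2}{2M_s(d+1)}.
\end{align*}
So it suffices to show that the rank increases whenever a $\mathrm{Bin}(M_s,p_e')$ error count $N$ satisfies $N\le T$. Then $p_1 \ge \Pr(N \le T)$, and the normal approximation of the binomial (justified by the hypothesis that $M_sp_e'$ is large) gives the $\Phi$ expression.

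\textbf{Step 1 (algebraic reduction).} Write the erroneous columns as $\tilde m_i = m_i \oplus e_i$ and $\tilde p = p \oplus e_p$, where the $e_i, e_p \in\{0,1\}^{M_s}$ have i.i.d.\ Bernoulli$(p_e')$ entries. Since $p=\bigoplus_i m_i$, we have $\tilde p = \bigoplus_i \tilde m_i \oplus u$ with $u = e_p \oplus \bigoplus_i e_i$. Because $M_s \gg d$, the $\tilde m_i$ are generically independent. Hence the rank stays at $d$ exactly when $u \in \mathrm{span}\{\tilde m_1,\dots,\tilde m_d\}$. This set has at most $2^d$ elements, and the rank increases precisely when $u$ avoids it.

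\textbf{Step 2 (threshold argument; the main obstacle).} Here I must show that $u$ can lie in the span only if the number of errors exceeds $T$. Since some error is present, the case $u=0$ needs the errors to cancel coordinatewise, so I would set it aside and assume $u\neq 0$. Then $u$ must equal a nonzero combination $\bigoplus_{i\in S}\tilde m_i$. Such a combination agrees with $\bigoplus_{i\in S} m_i$ up to the errors in the columns of $S$. The codeword part $\bigoplus_{i\in S} m_i$ has uniformly random bits, so its weight concentrates near $M_s/2$; the $-d$ correction in $(M_s-d)$ accounts for the $d$ constrained dimensions. The weight of $u$, in contrast, is at most the total number of errors spread over the $d+1$ columns.

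I would bound this as follows. Matching a vector of weight about $(M_s-d)/2$ requires the errors to cover that many positions across $d+1$ columns. This forces at least one column's error count $N$ to be of order $(M_s-d)/(2(d+1))$, sharpened to $T=(M_s-d)^2/(2M_s(d+1))$ by using the fraction $(M_s-d)/M_s$ of free rows. I expect this step to need the most care, and the first thing I would try is a union bound over the $2^d$ subsets $S$ combined with a weight/Hamming-distance comparison.

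\textbf{Step 3 (probability estimate).} The per-column error count $N \sim \mathrm{Bin}(M_s,p_e')$ is approximated by $\mathcal N\bigl(M_sp_e',\,M_sp_e'(1-p_e')\bigr)$, which is valid since $M_sp_e'$ is large. Then $p_1 \ge \Pr(N\le T) \approx \Phi\!\left(\frac{T-M_sp_e'}{\sqrt{M_sp_e'(1-p_e')}}\right)$. Multiplying the numerator and denominator by $2M_s(d+1)$ yields \eqref{rankincrease}.
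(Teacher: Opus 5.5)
Your Steps 1 and 3 match the paper: you reduce to whether $u=e_p\oplus\bigoplus_i e_i$ lies in the span, then normal-approximate $\mathrm{Bin}(M_s,p_e')$ at a threshold. The gap is Step 2, which is where all the content of $T$ lives, and you do not derive it. The phrase ``sharpened to $T$ by using the fraction $(M_s-d)/M_s$ of free rows'' is read off from the target formula rather than obtained from an argument. The missing ingredient, which the paper supplies, is an explicit lower bound on the minimum distance $d_m$ of the random code spanned by $m_1,\dots,m_d$. The paper proceeds as follows:
\begin{itemize}
\item It invokes the Gilbert--Varshamov-type bound $H_2(d_m/M_s)\ge 1-d/M_s$.
\item It weakens this with $H_2(x)\le 2\sqrt{x(1-x)}$, drops $d_m/M_s$ against $1$, and states $d_m\ge (M_s-d)^2/(2M_s)$.
\item A nonzero $u$ in the span forces the combined error weight to be at least $d_m$.
\item The worst case of disjoint supports spreads this over $d+1$ columns, giving the per-column threshold $d_m/(d+1)=T$.
\end{itemize}

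The route you suggest (a union bound over subsets plus concentration of weights at $M_s/2$) cannot reach $T$. It gives only $d_m\ge M_s/2-O(\sqrt{M_s d})$ with high probability, and that is essentially the true size of the minimum over the $2^d-1$ nonzero combinations. By contrast, $(d+1)T=M_s/2-d+d^2/(2M_s)$, which is larger when $M_s\gg d$. Already for $d=1$, $|m_1|<(d+1)T$ with probability about $1/2$. So the stated constant is reachable only through the paper's heuristic chain. Note that this chain, done literally, gives $2\sqrt{d_m/M_s}\ge 1-d/M_s$, i.e.\ $d_m\ge (M_s-d)^2/(4M_s)$ rather than $(M_s-d)^2/(2M_s)$. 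An honest Step 2 yields a bound of the same shape with a smaller threshold, not the stated one.

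There are two further slips. First, Step 2 gives ``rank fails to increase $\Rightarrow$ some column has more than $T$ errors.'' Writing $N_j$ for the error count of column $j$, its contrapositive yields only $p_1\ge\Pr(\max_j N_j\le T)=\Pr(N\le T)^{d+1}$, not $p_1\ge\Pr(N\le T)$. The paper's disjoint-support step makes the same leap, so this is shared, but it still does not follow from your argument. Second, the case $u=0$ must be bounded, not set aside. It has probability about $e^{-(d+1)M_sp_e'}$, which is negligible exactly because $M_sp_e'$ is large.
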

\begin{proof}
    See Appendix for a detailed proof. 
    Proof sketch: We (almost surely) lower bound the minimum distance of the code formed by the span of $\{ m_1, m_2, \ldots, m_d\}$. This distance increases with increasing $M_s$. Now given at least one bit error in the matrix, there must be enough errors to `cross' this minimum distance, if the linear dependency is to be retained. For a small bit error probability and high $M_s$, this becomes vanishingly unlikely, concluding the proof.
\end{proof}

Simplifying  \eqref{rankincrease} for $M_s >> d$ and assuming $p_e' << 1/(d+1)$ we have
\begin{align*}
    p_1 \approx \Phi \left(\frac{\sqrt{M_s}}{2(d+1)\sqrt{p_e'(1-p_e')}}\right).
\end{align*}
For reasonable choices like $M_s = 10^3$, $d=10$, $p_e' = 0.02$, $p_1$ is almost $1$. Therefore when the noise is high enough such that $M_sp_e'$ is large, the code recovery will not be accurate. This motivates why the number of columns with at least one bit error captures the idea of a good code recovery quality.

Now, we analytically derive the expected number of columns with at least one bit error.

\begin{theorem}
    Let $p_e$ denote the probability of a bit error. Let $p_u$ denote the probability of a bit being unreliable.
    \begin{itemize}
        \item[(1)] Then $p_u$ depends on the parameter $t_1$ as \begin{align}\label{pu}
            p_u = Q\left( \frac{1-t_1}{\sigma}\right)-Q\left( \frac{1+t_1}{\sigma}\right). 
                \end{align}
        \item[(2)] If $F(k; n, p) = \sum_{i=0}^k {n \choose i} p^i (1-p)^{n-i}$ represents the Binomial CDF, then the expected number of columns with at least one error is given by
        \begin{align}\label{ec}
        \mathbb{E}[C] \approx n-n\left(1-\frac{p_e F(t_2-1; n-1, p_u)}{F(t_2, n, p_u)}\right)^{M_s}.
        \end{align}
        When $p_e$ is small enough such that $M_sp_e << 1$, we have
        \begin{align}\label{ecapprox}
           \mathbb{E}[C] = n\mathbb{E}[M]p_e F(t_2-1; n-1, p_u),
        \end{align} where $\mathbb{E}[M]$ is the total expected number of noisy codewords needed to obtain $M_s$ suitable codewords. This is given by
        \begin{align}\label{expmsgs}
    \mathbb{E}[M] = \frac{M_s}{F(t_2; n, p_u)}.
\end{align}
    \end{itemize}
\end{theorem}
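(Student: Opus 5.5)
Part (1) is a direct computation. Conditioned on the transmitted symbol $b=+1$, the received value is $r=1+v$ with $v\sim\mathcal{N}(0,\sigma^2)$. The bit is unreliable exactly when $-t_1<r<t_1$, i.e. $-1-t_1<v<t_1-1$. Writing this probability through the standard normal gives
\begin{align*}
\Phi\left(\frac{t_1-1}{\sigma}\right)-\Phi\left(\frac{-1-t_1}{\sigma}\right) = Q\left(\frac{1-t_1}{\sigma}\right)-Q\left(\frac{1+t_1}{\sigma}\right).
\end{align*}
By the symmetry $b\mapsto -b$, $v\mapsto -v$, the same value holds for $b=-1$. Since the bits are equiprobable, unconditioning gives \eqref{pu}.

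For part (2), I would first model the bits of a received vector as independent across positions, since the noise samples are i.i.d. Each bit is unreliable with probability $p_u$, so the number of unreliable bits $j_i$ is $\mathrm{Bin}(n,p_u)$. A vector is suitable when $j_i\le t_2$; I will take the paper's $F(t_2;n,p_u)$ as this acceptance probability and keep the paper's convention on the boundary. Each received vector is accepted independently, so the number of draws needed to collect $M_s$ suitable vectors is negative binomial, with mean $M_s/F(t_2;n,p_u)$. This is \eqref{expmsgs}.

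The key step is the error probability of a fixed bit, conditioned on its vector being suitable. An error requires $|r|$ to land on the wrong side of $0$. With $t_1\in[0,1]$ I would adopt the paper's approximation that an erroneous bit is (essentially) unreliable; this is accurate when $t_1$ is close to the region where errors concentrate. Under this approximation, fix a position. The event that this bit is in error has probability $p_e$. Given that, the remaining $n-1$ positions must contribute at most $t_2-1$ unreliable bits, which has probability $F(t_2-1;n-1,p_u)$. Applying Bayes' rule with the acceptance probability,
\begin{align*}
P(\text{error}\mid\text{suitable}) \approx \frac{p_e F(t_2-1;n-1,p_u)}{F(t_2;n,p_u)} =: p_e'.
\end{align*}
Suitable rows are i.i.d., so a column of $M_s$ rows is error-free with probability $(1-p_e')^{M_s}$. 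Linearity of expectation over the $n$ columns then gives \eqref{ec}.

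The small-$p_e$ form follows from $(1-x)^{M_s}\approx 1-M_sx$ when $M_sx\ll 1$. This gives $\mathbb{E}[C]\approx nM_s p_e F(t_2-1;n-1,p_u)/F(t_2;n,p_u)$. Substituting \eqref{expmsgs} turns this into \eqref{ecapprox}.

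I expect the main obstacle to be justifying the Bayes step, i.e. replacing "error" by an event that is counted among the unreliable bits. An error with $|r|\ge t_1$ would instead be conditioned on $F(t_2;n-1,p_u)$. My first attempt would be to bound this discrepancy: the probability of an error with $|r|\ge t_1$ is $Q((1+t_1)/\sigma)$. I would then argue either that it is negligible, or that it is already absorbed because the paper writes the result with $\approx$.
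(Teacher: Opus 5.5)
Your proof is correct and is essentially the paper's argument. Part (1) is the same Gaussian integral over the unreliable zone. In part (2) the paper computes the same per-bit conditional error probability as $\mathbb{E}[E]/n$, summing over a truncated-binomial number of unreliable bits with error rates $p_{e|u}$ and $p_{e|r}$. You get the same quantity directly by Bayes' rule at a fixed position, and from there the remaining steps coincide: the factor $(1-p_e')^{M_s}$ per column, linearity over columns, linearization, and the negative-binomial mean.

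The term you are worried about is exactly the one the paper drops, namely $Q\left(\frac{1+t_1}{\sigma}\right)\binom{n-1}{t_2}p_u^{t_2}(1-p_u)^{n-1-t_2}/F(t_2;n,p_u)$. The paper neglects it by appealing both to $Q\left(\frac{1+t_1}{\sigma}\right)\ll p_e$ and to this binomial pmf being small relative to $F(t_2-1;n-1,p_u)$. Keep the second factor in your justification as well, since $Q\left(\frac{1+t_1}{\sigma}\right)\ll p_e$ alone need not suffice when $t_2$ is well below $np_u$.
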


\begin{proof}
    See Appendix for a detailed proof.
    Proof sketch: The first part follows from integrating the normal PDF over the `unreliable' zone. For the second, $\mathbb{E}[C]$ is proportional to the probability that one column has at least one bit error. This probability depends on the expected number of bit errors in each code word. This quantity can be calculated using the distribution of the number of unreliable and reliable bits in each code word and its effect on the number of bit errors.
\end{proof}

\begin{figure}
    \centering
    \includegraphics[width=0.8\linewidth]{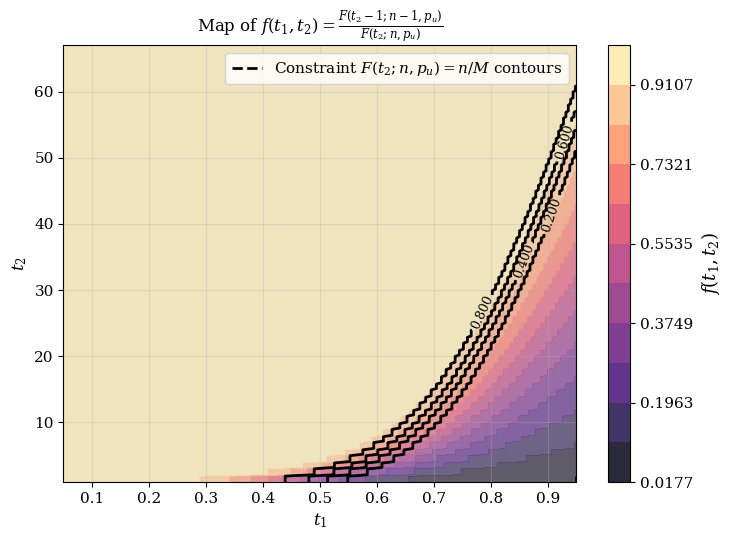}
    \caption{Contour plot of the algorithmic error versus $t_1$ and $t_2$ for a code of length $n = 136$ received at an SNR of $10$ dB. $t_1^* = 0.99$ and $t_2^* = 1$ and the minimum $f(t_1^*, t_2^*) = 0.0153$. Black lines represent constant $M$ curves (see equation \eqref{contours}).}
    \label{fig:contours}
\end{figure}

Equation \eqref{ecapprox} has intuitive  meaning. We observe that $nM_s$ is the total number of bits in the word matrix. Multiplying this by $p_e$ gives us the fraction of these bits in error due to the environment. Hence, we call the term $nM_sp_e$ the \textit{ambient} error. The other term is the factor by which algorithmic optimization helps decrease the error. We call this term the \textit{algorithmic} error. The optimal $t_1$ and $t_2$ values minimize $\mathbb{E}[C]$. Therefore the optimal values minimize the algorithmic error.
\begin{align*}
    t_1^*, t_2^* = \arg \min_{t_1, t_2} \frac{F(t_2-1; n-1, p_u)}{F(t_2; n, p_u)}.
\end{align*}
It is important to note that this also places a (rather stringent) requirement on the expected number of recovery messages (see \eqref{expmsgs}). This can be seen in Fig. \eqref{fig:contours}. 

\begin{figure}
    \centering
    \includegraphics[width=0.85\linewidth]{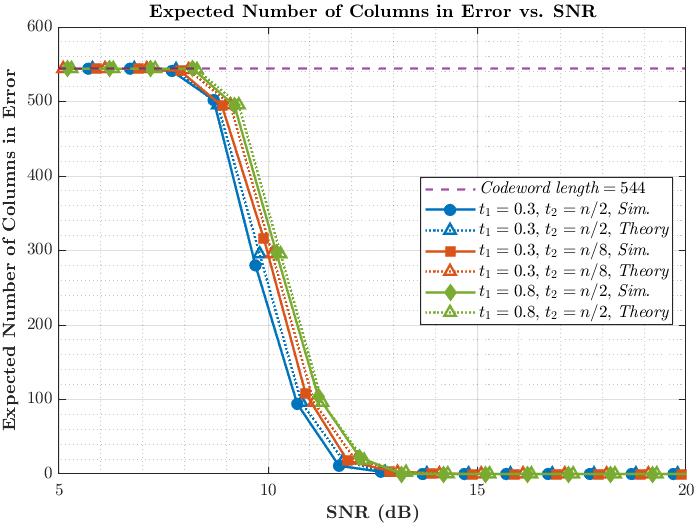}
    \caption{Theoretical and Simulated $\mathbb{E}[C]$ for three cases: $t_1 = 0.3, t_2=n/2$, $t_1 = 0.3, t_2 = n/8$, $t_1 = 0.8, t_2 = n/2$ for $1,000$ messages of a $544$ length LDPC code from $5$ to $20$ dB SNR. The close match demonstrates the correctness of the theoretical result over a wide range of $t_1$ and $t_2$.}
    \label{fig:simultheory}
\end{figure}

Fig. \eqref{fig:contours} shows a contour plot of the algorithmic error as a function of $t_1$ and $t_2$ for a code of length $n = 136$ at an SNR of $10$ dB. The four black lines represent the contours of the function $F(t_2; n, p_u)$ when it is equal to $0.2$, $0.4$, $0.6$ and $0.8$. The lower values correspond to a higher (expected) message requirement, as seen in \eqref{expmsgs}. The plot tells us that the random choice of $t_1$ and $t_2$ is harmful because the majority of the landscape yields a high algorithmic error. We also observe that the contours of the objective function are almost parallel to the contours of constant $F(t_2; n, p_u)$. Therefore, minimizing algorithmic error also greatly reduces $F(t_2; n, p_u)$. This implies that we need to observe many more messages to find $M_s$ suitable code words. In some scenarios, this might not be possible. We reformulate the problem when $M$ becomes a fixed value. 

If $M_s < n$, the rank of the word matrix would be limited by $M_s$. Since we do not want that, it must follow that $M_s \geq n$. For large $M_s$, $\mathbb{E}[C]$ becomes large as well since each column is longer and is more susceptible to bit errors. Therefore, in order to balance the limitation of rank estimation and more errors, it is optimal to set $M_s = n$. Then \eqref{expmsgs} places a constraint on what values $t_1$ and $t_2$ can take. In this case, the optimal values are given by
\begin{align}
    t_1^*, t_2^* =  \arg \min_{t_1^*, t_2^*} F(t_2-1; n-1, p_u) \\ \label{contours}
    \textrm{s.t. } F(t_2; n, p_u) = \frac{n}{M}.
\end{align}
Since the contours are almost parallel, not much optimization can be done along the constraint. This leads to the intuitive conclusion that the ambient error is the more significant limiting factor in recovery.

Suppose $k'$ is the observed rank of the word matrix. Clearly $k' > k$. The difference $k'-k$ represents the number of columns in the linearly dependent subset of the word matrix that have stopped showing that relationship due to bit errors. On average, a $1-k/n$ fraction of $\mathbb{E}[C]$ columns having at least one bit error are parity check columns. Each of these certainly contributes to one rank increase. Hence we can say
\begin{align*}
    k' - k \geq \left(1-\frac{k}{n}\right) \mathbb{E}[C].
\end{align*}
Rearranging gives us an estimate of the code rate estimate that is used in equation \eqref{coderateestimate} in the algorithm.
\begin{align}\label{cresttheory}
    \hat{\rho} = \frac{k}{n} \approx \frac{k'-\mathbb{E}[C]}{n - \mathbb{E}[C]}.
\end{align}

Equation \eqref{cresttheory} can also be intuitively interpreted. If we imagine removing $\mathbb{E}[C]$ columns from the estimated message columns, the code rate of this new code is a better estimate.

\section{Simulation Results}
In this Section, we show that the theoretical estimate of $\mathbb{E}[C]$ matches closely with observed values in practical noise scenarios. Subsequently, we show that the proposed correction for the code rate estimate in the algorithm is good for a good range of SNR.

\begin{figure}
    \centering
    \includegraphics[width=0.85\linewidth]{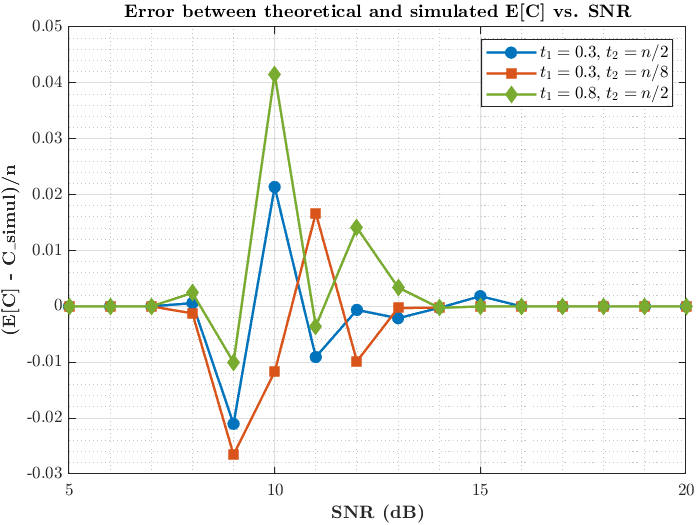}
    \caption{Error between the theoretical $\mathbb{E}[C]$ and observed number of columns in error normalized to the code length ($544$). The error goes to $0$ for high SNR.}
    \label{fig:error}
\end{figure}

\begin{figure}
    \centering
    \includegraphics[width=0.85\linewidth]{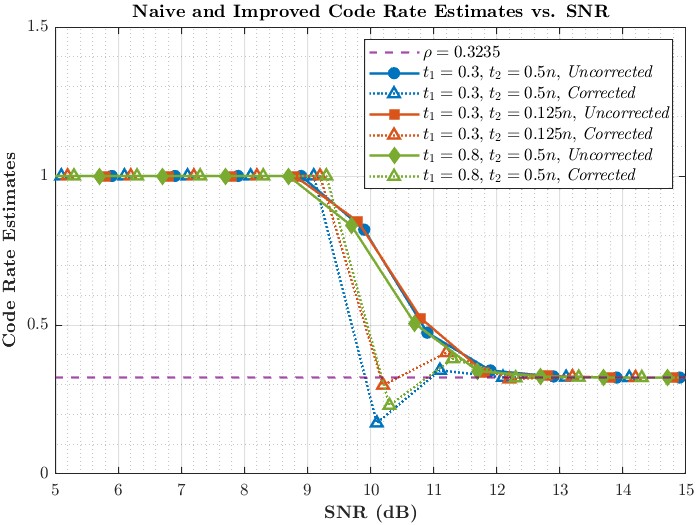}
    \caption{The naive code rate estimate $k'/n$ (solid lines) and improved estimate $(k'-\mathbb{E}[C])/(n - \mathbb{E}[C])$ (dashed lines) versus SNR for three different cases of $t_1$ and $t_2$. Simulation is done for $1000$ messages of a $544$ length LDPC code with a true code rate of $0.3235$.}
    \label{fig:coderateest}
\end{figure}

We used MATLAB to simulate our experiments. We generate $1000$ random binary messages and code them using an LDPC code of length $544$. We used standard LDPC coding for 5G New Radio (NR). The code is not punctured and its true rate is $\rho = 0.3235$ (close to $1/3$). Each code word is modulated with BPSK and sent over channels with SNRs varying from $5$ dB to $20$ dB. The recovery algorithm described in Section $2$ is used. 

The formula for $\mathbb{E}[C]$ is first validated for a wide range of $t_1$ and $t_2$ and SNRs. This is shown in Fig. \ref{fig:simultheory}. The error is less than $5\%$ as seen in Fig. \eqref{fig:error}. 

In Fig. \ref{fig:coderateest} the naive code rate estimate is shown along with the proposed correction for various SNRs. The points in the figure are slightly laterally shifted to prevent overlap for clarity. For $5$ to $9$ dB SNRs the word matrix becomes full rank due to errors and it is trivial to see that the proposed correction does not help ($k'=n \implies k'-\mathbb{E}[C] = n - \mathbb{E}[C]$). At $10$ dB SNR, the word matrix does not become full rank and the proposed correction sharply brings the estimated code rate close to the true rate $0.3235$. The naive code rate estimate converges to the true code rate only at $12$ dB in this case. This empirically shows that the correction enables us to accurately estimate the code rate $2$ dB lower than the naive estimate. We also see that the accuracy of the code rate estimate exactly follows the $\mathbb{E}[C]$ curve from Fig. \eqref{fig:simultheory}. This verifies that $\mathbb{E}[C]$ characterizes the code rate estimate quality. 

Below we compare this corrected estimate with existing methods:    1) Ramabadran \cite{ramabadran} simulates a [648, 162] LDPC code with BPSK modulation and recovers the code rate at SNRs $\geq 10$ dB, matching the performance of our method. The number of code words used was $1296$, similar to our simulation.
    2)  Wang \cite{wang} simulates a [648, 540] LDPC code and recovers the code at a BER of $1.5 \times 10^{-3}$ that in a modulated stream BPSK corresponds to a $8.5$ dB SNR. However, the recovery process required $3,840,000$ code words, whereas our proposed algorithm recovers a code of similar size using only $1000$ code words.

We also observe that the code rate estimate at $10$ dB SNR is poorer when $t_1 = 0.3$ and $t_2 = n/2$ compared to the other choices. This highlights the importance of optimizing $t_1$ and $t_2$. At $10$ dB SNR, $t_1 = 0.07$ and $t_2 = 0.007$ are found to be close to the optimal parameters obtained by solving \eqref{contours}. This was found using a grid search of the parameters $t_1$ and $t_2$ over $[0, 1]$. The optimization landscape is smooth so the grid search yields close to optimal values for reasonable resolution.

For higher block lengths, the recovery becomes harder because of the increased word matrix size. We generated $2000$ random binary messages of LDPC code of length $1088$ with the same code rate $\rho = 0.3235$. The results for this simulation are shown in Fig. \ref{fig:coderateest_1088}. We see that the proposed correction sharply brings the naive estimate to the true value when the word matrix is rank deficient (at SNRs $\geq 11$ dB). 

For very large block lengths of order $10^4$, it is difficult in practice to obtain more than (order of) $10^4$ code words. Then the ambient error is $nM_sp_e = 10^8p_e$ which should be small compared to $10^4$. Therefore, in practice, we need $p_e \approx 10^{-6}$ for a good recovery and this requires an SNR of at least $10$ dB.

\begin{figure}
    \centering
    \includegraphics[width=0.85\linewidth]{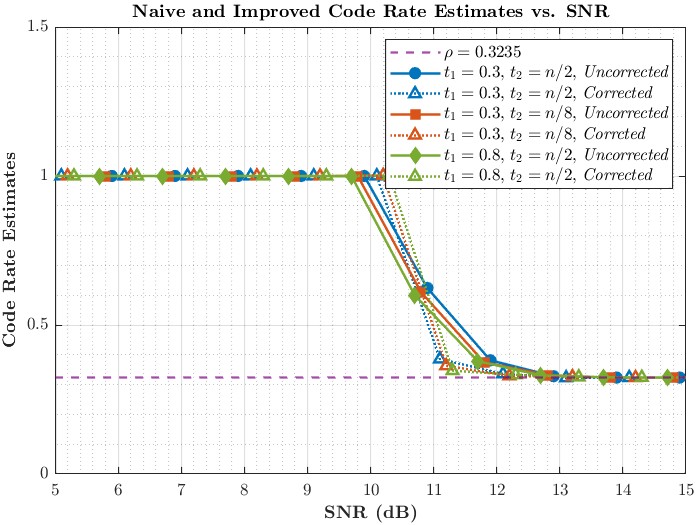}
    \caption{The naive code rate estimate $k'/n$ (solid lines) and improved estimate $(k'-\mathbb{E}[C])/(n - \mathbb{E}[C])$ (dashed lines) versus SNR for three different cases of $t_1$ and $t_2$. Simulation is done for $2000$ messages of a $1088$ length LDPC code with a true code rate of $0.3235$.}
    \label{fig:coderateest_1088}
\end{figure}

\section{Conclusion}
We proposed a metric ($\mathbb{E}[C]$) to characterize the quality of the blind recovery of code parameters in any rank based estimation technique in linear block codes. We motivated the reasoning behind using the metric by arguing in the presence of at least one error, the rank of the word matrix increases by one with high probability. We then derived a closed form expression for this metric which depended on two factors: the ambient error and the algorithmic error. 

We then used this metric to derive a better code rate estimate. Finally we minimized the algorithmic error to derive optimal parameter values for recovery for two practical constraints: one when the receiver can control the number of noisy code words received and the other when they cannot. We verified each of these theoretical results by simulating a $544$ length LDPC code over various SNRs and comparing it with existing similar approaches.

We discussed a practical lower bound for the SNR for the reasonable recovery of the codes with large code word lengths.

\bibliographystyle{IEEEtran}
\bibliography{definitions,bibliofile}

\appendices

\section{Proof of Theorem 1}
\begin{proof}
    We first observe that the column vectors $m_1, m_2, \ldots, m_d$ are random vectors. The span of these vectors form a random linear code $S$. The minimum distance of this code $d_m$ is almost surely bounded by 
    \begin{align*}
        H_2\left(\frac{d_m}{M_s}\right) \geq 1 - \frac{k}{M_s},
    \end{align*}
    where $H_2(\cdot)$ is the binary entropy function \cite{barg}. Therefore we have 
    \begin{align*}
        2\sqrt{\frac{d_m}{M_s}\left(1-\frac{d_m}{M_s}\right)} \geq 1 - \frac{k}{M_s},
    \end{align*} and neglecting $d_m/M_s$ relative to $1$, we obtain 
    \begin{align}\label{mindist}
        d_m \geq \frac{(M_s-k)^2}{2M_s}.
    \end{align}
    Now suppose each $m_i$ and $p$ are corrupted with noise vectors $e_i$ and $e$ respectively. Hence $m_i' = m_i \oplus e_i$ and $p' = p \oplus e$. A noiseless $p$ lies inside $S$. To preserve the rank, $p'$ must lie in the span of $m_1', m_2', m_3', \ldots, m_d'$. In the worst case scenario, $p' = p \oplus e = \bigoplus_{i=1}^d m_i' = \bigoplus_{i=1}^d (m_i \oplus e_i)$ which implies $p \oplus (e \bigoplus_{i=1}^d e_i)$ lies in the span of $m_1, m_2, m_3, \ldots, m_d$. If the total weight of the sum of the above $d+1$ error vectors does not exceed $d_m$, the erroneous vector $p'$ cannot be in the new span and hence the rank must increase by $1$. Therefore we have
    \begin{align*}
        p_1 \geq \mathbb{P}\left(\left|e \bigoplus_{i=1}^d e_i\right| \leq d_m \right).
    \end{align*} 
 
 The above is lower bounded when none of the error vectors have intersecting supports and hence
 \begin{align*}
        p_1 \geq \mathbb{P}\left(|e| \leq \frac{d_m}{d+1} \right).
    \end{align*}  
    Since $e$ is a vector of Bernoulli distributed ones, it follows that $|e|$ is Binomially distributed with mean $M_s p_e'$ and variance $M_sp_e'(1-p_e')$. For $M_s >> d$ this can be approximated by a normal CDF (after substituting for $d_m$ using \eqref{mindist}). This concludes the proof.
\end{proof}

\section{Proof of Theorem 2}
\begin{proof}
The probability that a bit is unreliable is independent of whether it is a $0$ or a $1$. Therefore $p_u$ is simply given by
\begin{align*}
    p_u = \int_{1-t_1}^{1+t_1}  \frac{1}{\sigma \sqrt{2\pi}} \exp \left(\frac{-\nu^2}{2\sigma^2}\right) d\nu,
\end{align*}
which reduces to equation \eqref{pu} trivially.
We define a few random variables to aid our proof of equation \eqref{ec}. All the random variables defined below are conditioned on the code words being chosen from the word matrix and hence have $t_2$ or lesser unreliable bits. For each code word, let $U_k$ and $R_k$ be random variables denoting the number of bit errors among the given $k$ unreliable bits and $n-k$ reliable bits respectively. Let $E_k$ denote the number of bit errors in a code word conditioned on it having $k$ unreliable bits. Then we have
\begin{align*}
    E_k = U_k + R_k.
\end{align*}
Let $C_i$ be the random variable which takes the value $1$ if the $i$th column of the word matrix has at least one bit error and $0$ otherwise. Let $C$ be the random variable denoting the number of columns of the word matrix having at least one bit error. We then have 
    $C = \sum_{i=1}^n C_i$.
Let $E$ be the random variables denoting the total number of bit errors in a code word and $K$ be the random variable denoting the number of unreliable bits in a code word. Let $M$ denote the number of code words required until we have enough suitable code words for recovery. Hence from the identical distribution of $C_i$,
\begin{align}\label{ecog}
    \mathbb{E}[C] = \sum_{i=1}^n \mathbb{E}[C_i] = n\mathbb{E}[C_1] = n\mathbb{P}[C_1=1].
\end{align}
From the total expectation theorem,
\begin{align*}
    \mathbb{P}[C_1=0] = \left(\displaystyle \sum_{e=0}^n \left( 1-\frac{e}{n}\right) \mathbb{P}[E = e]\right)^{M_s}
\end{align*}
\begin{align}\label{pc1is0}
    \implies \mathbb{P}[C_1=0] = \left(\displaystyle 1-\frac{\mathbb{E}[E]}{n}\right)^{M_s}.
\end{align}
Now $\mathbb{E}[E]$ can be written as
\begin{align*}
    \mathbb{E}[E] &= \sum_{k=0}^n \mathbb{E}[E_k] \mathbb{P}[K = k] 
    = \sum_{k=0}^n (\mathbb{E}[U_k]+\mathbb{E}[R_k)] \mathbb{P}[K = k].
\end{align*}
Let $p_{e|u}$ denote the probability of an unreliable bit in error and $p_{e|r}$ denote the probability of a reliable bit being in error. Then $U_k$ and $R_k$ are binomially distributed as $U_k \sim \text{Binomial}(k, p_{e|u})$ and $R_k \sim \text{Binomial}(n-k, p_{e|r})$. Substituting for their expected values,
\begin{align*}
    \mathbb{E}[E] &= \sum_{k=0}^n (kp_{e|u} + (n-k)p_{e|r})\mathbb{P}[K = k].
\end{align*}
$K$ is distributed as a truncated Binomial (since we only consider the word matrix). Hence the Probability Mass Function (PMF) of $K$ is
\begin{align*} \mathbb{P}[K = k] = \begin{dcases}
\dfrac{\binom{n}{k}p_u^k(1-p_u)^{n-k}}{F(t_2; n, p_u)} , k \leq t_2\\
0, k > t_2.
\end{dcases}
\end{align*}
Hence the sum reduces (after some manipulation) to
\begin{align*}
    &=\sum_{k=0}^{t_2} (kp_{e|u} + (n-k)p_{e|r}) \left\{ \frac{\binom{n}{k}p_u^k(1-p_u)^{n-k}}{F(t_2; n, p_u)} \right\},\\
    &=\frac{np_{e|u}p_u}{F(t_2; n, p_u)}\sum_{k=1}^{t_2} \binom{n-1}{k-1}p_u^{k-1}(1-p_u)^{n-k} \\
    &+
    \frac{np_{e|r}(1-p_u)}{F(t_2; n, p_u)}\sum_{k=0}^{t_2} \binom{n-1}{k}p_u^k(1-p_u)^{n-k-1}.
\end{align*}
Both the terms above are binomial CDFs. From our notation, the above expression reduces to
\begin{align*}
    =\frac{np_{e|u}p_u F(t_2-1; n-1, p_u) + np_{e|r}p_r F(t_2; n-1, p_u)}{F(t_2; n, p_u)}.
\end{align*}
We observe that $p_e = p_{e|u}p_u+p_{e|r}p_r$ and this nicely comes out as a common factor. The remaining term is negligible because $p_{e|r}p_r$ is small compared to $p_{e|u}p_u$ and $\binom{n-1}{t_2} p_u^{t_2} (1-p_u)^{1-t_2}$ is small compared to $F(t_2-1; n-1, p_u)$ for reasonably large $t_2$. Hence we have
\begin{align*}
    \mathbb{E}[E] \approx \frac{np_e F(t_2-1; n-1, p_u)}{F(t_2; n, p_u)}.
\end{align*}
Substituting this back in \eqref{pc1is0} gives us
\begin{align*}
    \mathbb{P}[C_1=0] \approx \left(\displaystyle 1-\frac{p_e F(t_2-1; n-1, p_u)}{F(t_2; n, p_u)}\right)^{M_s}.
\end{align*}
Substituting this in equation \eqref{ecog} gives us equation \eqref{ec}. If $p_e$ is low (such that $M_sp_e$ is small), then this reduces to 
\begin{align}\label{ratiobinom}
    \mathbb{E}[C] = \frac{nM_sp_e F(t_2-1; n-1, p_u)}{F(t_2; n, p_u)}.
\end{align}
% If $M_s < n$, the rank of the word matrix would be limited by $M_s$. Since we do not want that, it must follow that $M_s \geq n$. For large code word lengths, this requires a large amount of frames to be stored and hence it is practically reasonable to set $M_s = n$. 

$M$ follows a negative binomial distribution with a success probability $F(t_2; n, p_u)$. Therefore the expected number of messages required is given by
\begin{align}\label{expmsgsproof}
    \mathbb{E}[M] = \frac{M_s}{F(t_2; n, p_u)},
\end{align}
and substituting this in equation \eqref{ratiobinom} completes the proof.
\end{proof}
 
 \end{document}